\newtheorem{theorem}{Theorem}
\newtheorem{definition}{Definition}
\newtheorem{lemma}{Lemma}
\newtheorem{remark}{Remark}
\newenvironment{proof}[1][Proof]{\noindent\textbf{#1.} }{\ \rule{0.5em}{0.5em}}
\def\la{\langle}
\def\ra{\rangle}
\def\a{\alpha}
\def\b{\beta}
\def\tr{{\rm tr}}
\begin{document}

\title{Recovering quantum information through partial access to the
environment}
\author{Laleh Memarzadeh,
        Chiara Macchiavello
        and Stefano Mancini
\thanks{L. Memarzadeh is with the Department of Physics, Sharif University of Technology,
Teheran, Iran.}
\thanks{C. Macchiavello is with the Department of Physics ``A. Volta", University of Pavia, I-27100 Pavia, Italy}
\thanks{S. Mancini is with the School of Science and Technology, University of Camerino,
I-62032 Camerino, Italy.}
}


\maketitle

\begin{abstract}
We investigate the possibility of correcting errors occurring on
a multipartite system through a feedback mechanism that acquires
information from partial access to the environment.
A partial control scheme of this kind might be useful when dealing with
correlated errors. In fact, in such a case, it could be enough to gather local
information to decide what kind of global recovery to perform.
Then, we apply this scheme to the depolarizing and correlated errors,
and quantify its
performance by means of the entanglement fidelity.
\end{abstract}

\begin{IEEEkeywords}
Quantum feedback, Quantum channels, Quantum correlated errors.
\end{IEEEkeywords}

\IEEEpeerreviewmaketitle

\section{Introduction}

\IEEEPARstart{Q}  {uantum noise}  is
 {the main obstacle for realizing quantum information tasks.
It results from the errors introduced on the system's state by the unavoidable interaction with the surrounding environment \cite{Zurek}.
As a consequence the quantum coherence features of the system's state are washed out.
To restore them, one could naively think of measuring the system
(gathering information about its state) and then applying a correction
procedure. This is the idea underlying the quantum feedback control mechanism \cite{WisemanBook}.
Actually, also quantum error correcting codes can be thought as belonging to this kind of strategy \cite{Knill}.
In particular,
one can make a measurement on the final state of the environment
and consider its classical result to recognize what kind of error has
occurred on the system due to the interaction with the environment.
Then, a proper correction should be performed on the system to reduce the effect of quantum noise \cite{QLF}.
Recently, a lot of attention has been devoted to this scheme from different aspects.
In \cite{HaydenKing,SVW} the capacity for this scenario has been studied and in
\cite{BCD} it has been shown that in certain cases repeated application of this scheme allows to remove completely the effects of quantum noise.
For a given measurement the optimal recovery scheme (the recovery necessary to restore the maximum value of quantum information) has been derived \cite{QLF}, while in \cite{MCM} it has been shown that the optimal measurement depends on the dimension of the system's Hilbert space.

In extending this quantum control strategy to multipartite systems, we must deal with a more intricate scenario. For instance
access to all subsystems' environments may not be available. Then we will
address the problem of recovering quantum information by feedback partial
control, that is the measurement is only done on some of the subsystems' environments while the actuation is performed on all subsystems.
In this case the feedback scheme will be effective if errors occurring on different subsystems are somehow correlated, so that gaining information on the measured subsystems also means to indirectly gain information about non-measured ones.
This will help in designing the recovery operation on the whole system.
We will consider a quite general kind of correlated errors on qubits and
determine the optimal recovery depending on the degree of errors' correlation.
We will also find the scaling of the performance versus the number of qubits (subsystems) while monitoring the error just on one
of them.

The layout of the paper is as follows. In Section II we briefly present the main conceptual and computational tools
needed to recover quantum information by means of a quantum feedback control scheme. To get some insights we apply,  in Section III, this strategy to the correlated depolarizing channel for two qubits when only one is monitored.
We then derive the main result for the system of $n$ qubits in Section IV and we draw our conclusions in
Section V.


\section{Recovering Quantum Information by Feedback Control}

The evolution of a system interacting with an environment can be
described by a completely positive and trace preserving map $T:\mathcal{L}(\mathcal{H}_{initial})\rightarrow \mathcal{L}(\mathcal{H}_{final})$
transforming the initial system's density operator in Hilbert space $\mathcal{H}_{initial}$ to a final density operator in Hilbert
space $\mathcal{H}_{final}$ ($\mathcal{L}(\mathcal{H})$ is the space of linear operators on $\mathcal{H}$). At the same time the initial state of the environment in
Hilbert space $\mathcal{K}_{initial}$ is mapped into a final one in $\mathcal{K}_{final}$. The evolution of the system can be described as the unitary
evolution of system and environment given by the unitary operator
$U: \mathcal{H}_{initial}\otimes \mathcal{K}_{initial}\rightarrow \mathcal{H}_{final}\otimes \mathcal{K}_{final}$.
By denoting with  $\rho$ and $\sigma$ the initial state of the system and environment respectively, the map of the
system evolution reads
\begin{equation}
T(\rho)=\tr_{_{{\mathcal{K}}_{final}}}[U(\rho\otimes\sigma) U^{\dag}],
\nonumber
\end{equation}
where $\tr_\bullet$ denotes the trace on the space $\bullet$.

To acquire some information about the errors occurred on the system one can perform
a measurement on the environment after the interaction with the system
has taken place. In general, this is described by a Positive Operator
Valued Measure (POVM) on $\mathcal{K}_{final}$, namely a set of
operators $M_{\alpha}\in \mathcal{L}(\mathcal{K}_{final})$ satisfying
\begin{equation}\label{measurment}
\sum_{\alpha}M_{\alpha}=I, \hskip 1 cm M_{\alpha}>0.
\end{equation}
The index $\alpha$ labels the classical measurement outcomes. Considering
an arbitrary observable $A\in \mathcal{L}(\mathcal{H}_{final})$, the expectation value of this observable is
\begin{equation}\label{Aave}
<A>=\tr_{_{\mathcal{H}_{final}}}\tr_{_{\mathcal{K}_{final}}}[ U(\rho\otimes\sigma )U^{\dag} (A\otimes I)],
\end{equation}
where $I$ is the identity on $\mathcal{L}(\mathcal{K}_{final})$.

\begin{definition}
We define by $T_{\alpha}:\mathcal{L}(\mathcal{H}_{initial})\rightarrow \mathcal{L}(\mathcal{H}_{final})$,
\begin{equation}
T_{\alpha}(\rho):=\tr_{_{\mathcal{K}_{final}}}[U(\rho\otimes\sigma )U^{\dag} (I\otimes  M_{\alpha} )],\nonumber
\end{equation}
the selected channel output corresponding to the outcome $\alpha$.
\end{definition}\medskip

Then, replacing $I$ in \eqref{Aave} with the identity resolution (\ref{measurment}), we get
\begin{equation}
<A>=\sum_{\alpha}\tr_{_{\mathcal{H}_{final}}}(T_{\alpha}(\rho)A).\nonumber
\end{equation}
Rewriting the expectation value of $A$ in the following way
\begin{equation}
<A>=\sum_{\alpha}p_{\alpha}\,\tr\left[\frac{T_{\alpha}(\rho)}{p_{\alpha}}A\right],\nonumber
\end{equation}
we can conclude that $p_{\alpha}=\tr(T_{\alpha}(\rho))$ is the probability of getting
$\alpha$ as the result of the measurement and the density matrix
$\frac{1}{p_{\alpha}}T_{\alpha}(\rho)$ as the selected state of the system after performing the measurement on the environment.

We can also define the most informative measurement \cite{QLF} in terms of Kraus operators \cite{Kraus} composing the channel $T_\a$.

\begin{definition}\label{mostinf}
Given a channel $T =\sum_{\alpha}T_{\alpha}$, the most informative measurement on the environment,
is such that we can describe the selected output of
the channel $T_{\a}$ by a single Kraus operator $T_{\a}(\rho)=t_{\a}\rho t_{\a}^{\dag}$.
\end{definition}\medskip

Therefore
 \begin{equation}\label{Kraus}
 T=\sum_{\alpha}t_{\a}\rho t_{\a}^{\dag}, \hskip 1cm \sum_{\a}t_{\a}^{\dag}t_{\a}=I.
  \end{equation}
In order to correct the errors due to the interaction with the environment,
we have to introduce a recovery operation.

\begin{definition}
Let $R_{\a}: \mathcal{L}(\mathcal{H}_{final})\rightarrow \mathcal{L}(\mathcal{H}_{initial})$ be the recovery operator
that acts on the selected output of the channel $T_{\a}(\rho)$ and
depends on the classical outcome of the measurement $\a$.
Then, the overall corrected channel takes the form
\begin{equation}
\label{Tcorr}
T_{corr}:=\sum_{\a}R_{\a}\circ T_{\a}.
\end{equation}
\end{definition}\medskip
Using (\ref{Kraus}) and a Kraus representation \cite{Kraus} for the recovery channel $R_{\a}$
\begin{equation}
R_{\a}(\rho')=\sum_{\beta}r_{\b}^{(\a)}\rho'r_{\b}^{(\a)\dag},\hskip 1 cm
\sum_{\beta}r_{\b}^{(\a\dag)}r_{\b}^{(\a)}=I,
\nonumber
\end{equation}
we can decompose the corrected channel as
\begin{equation}\label{krausTcorr}
T_{corr}(\rho)=\sum_{\a,\b}r_{\b}^{(\a)}t_{_{\a}}\rho \,t_{_{\a}}^{\dag}r_{\b}^{(\a) \dag}.
\end{equation}

To quantify the performance of the correction
scheme, we use the entanglement fidelity \cite{EntFid,Nielsen}.
\begin{definition}\label{def:entfid}
For a general map $\Phi: \mathcal{L}(\mathcal{H})\rightarrow \mathcal{L}(\mathcal{H})$ with Kraus operators $A_k$, the entanglement fidelity is defined as
\begin{equation}
\label{efid}
F(\Phi):=\la\Psi| \Phi\otimes I(|\Psi\ra\la\Psi|)|\Psi\ra=\frac{1}{d^2}\sum_k|\tr (A_k)|^2,
\end{equation}
where $d=dim \mathcal{H}$ and $|\Psi\ra\in \mathcal{H}\otimes \mathcal{H}$ is a maximally entangled state.
\end{definition}\medskip

We are interested in $F(T_{corr})$, the entanglement fidelity of the corrected map \eqref{Tcorr}. As a consequence
of \eqref{krausTcorr} and \eqref{def:entfid} we have
\begin{equation}\label{ftc}
F(T_{corr})=\frac{1}{d^2}\sum_{\a,\b}|\tr(r_{\b}^{(\a)}t_{_{\a}})|^2.
\end{equation}
The entanglement fidelity reaches its maximum value if quantum information is completely recovered,
or in other words if the corrected channel becomes an identity map. In \cite{QLF} it has been shown that
there exists a family of operators that
completely recover quantum information if and only if
\begin{equation}\label{correctable}
t_{\a}^{\dag}t_{\a}=c_{\a}I,\hskip 5mm \forall \a,
\end{equation}
with $c_{\a}\in\mathbb{R}_+$ and $\sum_{\a}c_{\a}=1$.

These results are obtained with the assumption that full access to the
environment is available and it is possible
to perform a measurement on the whole environment after the interaction
with the system. However, more generally we should assume that our access to the environment is partial. Here
we want to investigate how the performance of this correction scheme behaves
in this case and to see if we can still completely retrieve quantum
information.
To shed light on this problem, we study a map for which the complete
recovery of quantum information
is possible, provided that we have complete access to the environment.

Specifically we are going to consider the depolarizing quantum channel.
In the following we will consider
\begin{eqnarray}
\label{spaces}
\mathcal{H}:=\mathbb{C}^2,\quad
\mathcal{K}:=\mathbb{C}^2\otimes \mathbb{C}^2.\nonumber
\end{eqnarray}

\begin{definition}
The single qubit depolarizing channel is defined by
\begin{eqnarray}
\mathcal{H}_{initial}=\mathcal{H}_{final}&=&\mathcal{H},\nonumber\\
\mathcal{K}_{initial}=\mathcal{K}_{final}&=&\mathcal{K},\nonumber
\label{spaces1}
\end{eqnarray}
and
\begin{equation}
\label{depuni}
t_\alpha=\sqrt{p_\alpha}\sigma_\alpha,\nonumber
\end{equation}
where the operators $\sigma_\alpha$, with $\alpha=0,1,2,3$
(the Greek indices go from 0 to 3 while Latin indices go from 1 to 3), denote the Pauli operators (including the
identity operator), while $p_{0}=1-p$ and $p_1=p_2=p_3=\frac{p}{3}$.
\end{definition}\medskip

\begin{remark}
Since the Pauli operators satisfy the condition
(\ref{correctable}), the quantum information in this case
can be completely recovered.
To achieve this it is enough to consider a recovery channel described
by a single Kraus operator $\sigma_\a$, where $\a$ is the classical outcome
of the measurement.
Hence from \eqref{ftc} we have
\begin{equation}
\label{18}
F(T_{corr})=\sum_{\alpha=0}^3 \left| \sqrt{p_\alpha} \right|^2=1.\nonumber
\end{equation}
\end{remark}

However, the situation will be different when we enlarge the Hilbert spaces of the system and environment while performing a  measurement just on
a subsystem of the environment. In the next sections we show how the
performance of this scheme behaves when our access to the environment is partial.


\section{depolarizing channel for two qubits}

To study the feedback control scheme with partial access to the environment, we start by analyzing the
depolarizing channel
$T:\mathcal{L}(\mathcal{H}^{\otimes 2})\rightarrow \mathcal{L}(\mathcal{H}^{\otimes 2})$ acting
 on two qubits. In the following we assume that we can perform a measurement
on $\mathcal{L}(\mathcal{K})$ while the state of the environment belongs to $\mathcal{L}(\mathcal{K}^{\otimes 2})$.
Since the access to the environments is partial, the measurement can not be a most informative measurement
(see Definition \ref{mostinf}) and therefore the
selected output of the channel is in general given by
\begin{equation}
T_{\a}(\rho)=\sum_{\beta}t_{_{\a,\beta}}\rho t_{_{\a,\beta}}^{\dag}.\nonumber
\end{equation}
The Kraus operators $t_{\a,\b}$ will be
\begin{equation}
t_{\a,\b}=\sqrt{p_{\a}p_{\b}}\sigma_{\alpha}\otimes \sigma_{\b},\nonumber
\end{equation}
in the case of no correlations and
\begin{equation}
t_{\a,\b}=\sqrt{p_{\a}}\delta_{_{\a,\b}}\sigma_{\alpha}\otimes \sigma_{\b},\nonumber
\end{equation}
in the case of perfect correlations. In the first case
the outcome of the measurement does not give any information about the error occurred on the second qubit,
therefore the selected output of the channel is
\begin{equation}
\label{Tuc}
T_{\alpha}^{uc}=\sum_{\b}p_{\a}p_{\b}(\sigma_{\alpha}\otimes \sigma_{\beta})\rho(\sigma_{\alpha}\otimes \sigma_{\beta})^{\dag}.
\end{equation}
In the second case, the measurement of the environment is the most informative one (according to Definition \ref{mostinf}),
hence the selected output of the channel is
\begin{equation}
\label{Tcc}
T_{\a}^{cc}(\rho)=p_{\a} (\sigma_{\a}\otimes \sigma_{\a})\rho(\sigma_{\a}\otimes \sigma_{\a})^{\dag}.
\end{equation}
We will now consider a more general situation, which interpolates
between the two above situations. More explicitly, we consider a correlated
noise model which is a convex combination of two cases described above, namely
the uncorrelated noise and the completely correlated noise for two qubits
\cite{MP}.

\begin{definition}\label{T2}
The following convex combination of channels \eqref{Tuc}
and \eqref{Tcc}
\begin{equation}
T_{\a}:=(1-\mu)T_{\a}^{uc}+\mu T_{\a}^{cc},\nonumber
\end{equation}
where $\mu\in [0,1]$ quantifies the amount of correlation in noise,
defines the selected output.
\end{definition}\medskip

Our aim is now to design the recovery channel in order to achieve the maximum
value of the entanglement fidelity for
the corrected channel.

\begin{lemma}\label{th:rec}
The recovery map
\begin{equation}
R_{\a}(\rho'):=\sum_{\gamma}q_{_{\a,\gamma}}(\sigma_{\a}\otimes \sigma_{\gamma})\rho'(\sigma_{\a}\otimes \sigma_{\gamma}),\hskip 5 mm \sum_{\gamma}q_{_{\a,\gamma}}=1,\nonumber
\end{equation}
is optimal for the channel of Definition \ref{T2}.
\end{lemma}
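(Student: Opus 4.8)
The plan is to optimize the entanglement fidelity over \emph{all} admissible recoveries and show that the maximum already lies in the stated family. Write a generic recovery in Kraus form $R_\alpha(\rho')=\sum_k r_k^{(\alpha)}\rho'\,r_k^{(\alpha)\dagger}$ with $\sum_k r_k^{(\alpha)\dagger}r_k^{(\alpha)}=I$, and recall that the channel of Definition \ref{T2} has Kraus operators $\sqrt{(1-\mu)p_\alpha p_\beta}\,\sigma_\alpha\otimes\sigma_\beta$ and $\sqrt{\mu p_\alpha}\,\sigma_\alpha\otimes\sigma_\alpha$. Applying Definition \ref{def:entfid} with $d=4$ to $T_{corr}=\sum_\alpha R_\alpha\circ T_\alpha$ gives
\[
F(T_{corr})=\frac{1}{16}\sum_\alpha\Big[(1-\mu)p_\alpha\!\sum_{\beta,k}p_\beta\,|\tr(r_k^{(\alpha)}\,\sigma_\alpha\otimes\sigma_\beta)|^2+\mu p_\alpha\sum_k|\tr(r_k^{(\alpha)}\,\sigma_\alpha\otimes\sigma_\alpha)|^2\Big].
\]
Since the trace-preservation constraint is imposed separately for each outcome $\alpha$, the maximization decouples into independent problems, one per $\alpha$.

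For each $\alpha$ I would first absorb the perfectly known first-qubit error. As $\sigma_\alpha\otimes I$ is unitary with $(\sigma_\alpha\otimes I)^2=I$, the substitution $\tilde r_k^{(\alpha)}:=r_k^{(\alpha)}(\sigma_\alpha\otimes I)$ preserves the constraint $\sum_k\tilde r_k^{(\alpha)\dagger}\tilde r_k^{(\alpha)}=I$ and replaces each factor $r_k^{(\alpha)}(\sigma_\alpha\otimes\sigma_\beta)$ by $\tilde r_k^{(\alpha)}(I\otimes\sigma_\beta)$. Expanding $\tilde r_k^{(\alpha)}=\sum_{\xi\eta}c^{(k)}_{\xi\eta}\,\sigma_\xi\otimes\sigma_\eta$ in the two-qubit Pauli basis ($\xi,\eta\in\{0,1,2,3\}$) and using $\tr(\sigma_\xi)=2\delta_{\xi 0}$ together with $\tr(\sigma_\eta\sigma_\beta)=2\delta_{\eta\beta}$ yields $\tr(\tilde r_k^{(\alpha)}(I\otimes\sigma_\beta))=4\,c^{(k)}_{0\beta}$. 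Hence the objective sees only the coefficients carrying the identity on the first qubit: setting $Q_\beta:=\sum_k|c^{(k)}_{0\beta}|^2$, the $\alpha$-term becomes the linear functional $16\,p_\alpha\big[(1-\mu)\sum_\beta p_\beta Q_\beta+\mu Q_\alpha\big]$.

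The decisive estimate is the budget constraint. Taking the trace of $\sum_k\tilde r_k^{(\alpha)\dagger}\tilde r_k^{(\alpha)}=I$ and using the orthogonality $\tr[(\sigma_\xi\otimes\sigma_\eta)(\sigma_{\xi'}\otimes\sigma_{\eta'})]=4\delta_{\xi\xi'}\delta_{\eta\eta'}$ gives $\sum_{\xi\eta,k}|c^{(k)}_{\xi\eta}|^2=1$, so that $\sum_\beta Q_\beta\le 1$ with equality precisely when no weight leaks into first-qubit non-identity components. Maximizing the linear functional over the simplex $\{Q_\beta\ge0,\ \sum_\beta Q_\beta\le1\}$ therefore shows that every such component only wastes budget, and I would then confirm that the bound is saturated inside the proposed family: choosing $\tilde r_\gamma^{(\alpha)}=\sqrt{q_{\alpha,\gamma}}\,I\otimes\sigma_\gamma$, i.e. $r_\gamma^{(\alpha)}=\sqrt{q_{\alpha,\gamma}}\,\sigma_\alpha\otimes\sigma_\gamma$, gives $Q_\beta=q_{\alpha,\beta}$ with $\sum_\gamma q_{\alpha,\gamma}=1$, which realizes every admissible distribution $\{Q_\beta\}$ and hence the maximum. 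Since $\sigma_\gamma^\dagger=\sigma_\gamma$, these Kraus operators are exactly those of the map in the statement, so the optimum is attained there.

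I expect the main obstacle to be the optimality claim against genuinely general recoveries, that is, ruling out that two-qubit coherences in $r_k^{(\alpha)}$ or a non-Pauli channel on the second qubit could beat a classically correlated Pauli recovery. The Pauli-basis step is what makes this transparent: it collapses the entire objective onto the single scalar family $\{Q_\beta\}$, after which the trace budget $\sum_\beta Q_\beta\le1$ forces the optimum into the identity-on-first-qubit, Pauli-on-second-qubit sector. The remaining determination of the optimal weights $q_{\alpha,\gamma}$ (which depend on $\mu$ and $p$) is a routine linear program over the simplex.
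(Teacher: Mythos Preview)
Your argument is correct and follows the same overall strategy as the paper: expand the recovery Kraus operators in the Pauli basis, observe that the entanglement fidelity depends only on certain squared moduli, and use the trace-preservation constraint to reduce to a Pauli recovery with weights $q_{\alpha,\gamma}$.

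Where you differ is in rigor at the starting point. The paper's proof asserts at the outset that the recovery may be taken of the tensor-product form $\sigma_\alpha\otimes A_\gamma^\alpha$, calling this ``the most general map we can use as recovery'' and justifying it only by the remark that on the first qubit one should invert $\sigma_\alpha$. Strictly speaking this is not yet the most general two-qubit Kraus operator. Your approach closes that gap cleanly: you begin with fully general $r_k^{(\alpha)}$, absorb the known first-qubit error via the unitary substitution $\tilde r_k^{(\alpha)}=r_k^{(\alpha)}(\sigma_\alpha\otimes I)$, expand in the \emph{full} two-qubit Pauli basis $\sigma_\xi\otimes\sigma_\eta$, and then use the budget inequality $\sum_\beta Q_\beta\le 1$ to show explicitly that any weight in the $\xi\neq 0$ sector is wasted. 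This is exactly the step the paper glosses over, and it makes the optimality claim against arbitrary (non-factorizing, coherent) recoveries airtight; the paper's version buys brevity at the cost of that justification.
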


\begin{proof}
The recovery map can be described, without loss of generality, as
 \begin{eqnarray}
R_{\a}(\rho')&=&\sum_{\gamma}(\sigma_{\a}\otimes A_{\gamma}^\a)\rho'(\sigma_{\a}\otimes A_{\gamma}^\a)^{\dag},
\nonumber
\end{eqnarray}
where the single qubit operators $A_{\gamma}^\a$ can be expressed
in terms of the identity and the Pauli operators as
\begin{equation}
A_{\gamma}^\a=\sum_{\delta} c_{\gamma,\delta}^{\a} \sigma_{\delta}.\nonumber
\end{equation}
The completeness condition $\sum_{\gamma}
(\sigma_{\a}\otimes A_{\gamma}^\a)^{\dag}(\sigma_{\a}\otimes A_{\gamma}^\a)=I$
gives the following normalisation condition for the coefficients
$c_{\gamma,\delta}^{\a}$
\begin{eqnarray}
\sum_{\gamma,\delta}|c_{\gamma,\delta}^{\a}|^2=1.\nonumber
\end{eqnarray}
Notice that this is the most general map we can use as recovery.
Actually, on the first qubit the optimal action is to invert the action of
$\sigma_\a$ by $\sigma_\a$ itself, while on the second one we consider a
generic operator $A_{\gamma}^\a$ possibly correlated with the one on the first
qubit (that is the reason for the presence of the index $\a$ on
$A_{\gamma}^\a$).
Then the entanglement fidelity of the corrected channel, using \eqref{ftc},
takes the form
\begin{equation}
 F(T_{corr})=(1-\mu)\sum_{\a,\beta,\gamma}p_\a p_{\beta}|c_{\gamma,\beta}^{\a}|^2
 +\mu \sum_{\a,\gamma}p_\a |c_{\gamma,\a}^{\a}|^2.\nonumber
 \end{equation}
Notice that this can be rewritten as
 \begin{equation}
 \label{fid}
 F(T_{corr})=(1-\mu)\sum_{\a,\beta}p_\a p_{\beta}q_{\a,\beta}
 +\mu \sum_{\a}p_\a q_{\a,\a},
 \end{equation}
 where we defined the probabilities
 \begin{equation}
 q_{\a,\beta}=\sum_{\gamma}|c_{\gamma,\beta}^{\a}|^2.\nonumber
 \end{equation}
Eq. \eqref{fid} is the same expression that we would obtain by assuming
the recovery as a Pauli channel, namely with the Kraus operators
\begin{equation}
\sigma_{\a}\otimes A_{\gamma}^\a=\sqrt{q_{\a,\gamma}} \sigma_{\a}\otimes\sigma_{\gamma}.\nonumber
\end{equation}
with $\sum_{\gamma}q_{\a,\gamma}=1$ for all $\gamma$.
\end{proof}


By virtue of Lemma \ref{th:rec}, the corrected channel can be written as
\begin{eqnarray}
T_{corr}(\rho)&=(1-\mu)\sum_{\a,\b,\gamma}p_{\a}p_{_{\b}}q_{\a,\gamma}( {I}\otimes \sigma_{\gamma}\sigma_{\b})\rho( {I}\otimes \sigma_{\b}\sigma_{\gamma})\cr\cr
&+\mu\sum_{\a,\gamma} p_{\a}q_{\a,\gamma}( {I}\otimes \sigma_{\gamma}\sigma_{\a})\rho ( {I}\otimes \sigma_{\a}\sigma_{\gamma}),\nonumber
\end{eqnarray}
and its entanglement fidelity becomes
\begin{eqnarray}
F(T_{corr})&=&\frac{1-\mu}{16}\sum_{\a,\b,\gamma}p_{\a}p_{_{\b}}q_{_{\a,\gamma}}|\tr( {I}\otimes \sigma_{\gamma}\sigma_{\b})|^2\cr
&+&\frac{\mu}{16}\sum_{\a,\gamma} p_{\a}q_{_{\a,\gamma}}|\tr( {I}\otimes \sigma_{\a}\sigma_{\gamma})|^2\cr\cr
&=&(1-\mu)\sum_{\a,\b}p_{\a}p_{_{\b}}q_{_{\a,\b}}+\mu\sum_{\a} p_{\a}q_{_{\a,\a}}.\nonumber
\end{eqnarray}
Taking into account that $\sum_{\gamma}q_{_{\a,\gamma}}=1$ for all
values of $\a$, the above equation is simplified as follows
\begin{eqnarray}\label{27}
F(T_{corr})&=&(1-\mu)\frac{p}{3}\cr\cr
&+&(1-p)\left((1-\mu)(1-\frac{4p}{3})+\mu\right)q_{_{0,0}}\cr\cr
&+&\frac{p}{3}\sum_{i=1}^3\left((1-\mu)(1-\frac{4p}{3})q_{_{i,0}}+\mu q_{_{i,i}}\right).
\end{eqnarray}
\begin{figure}[t]
\centering
\includegraphics[height=0.25\textheight]{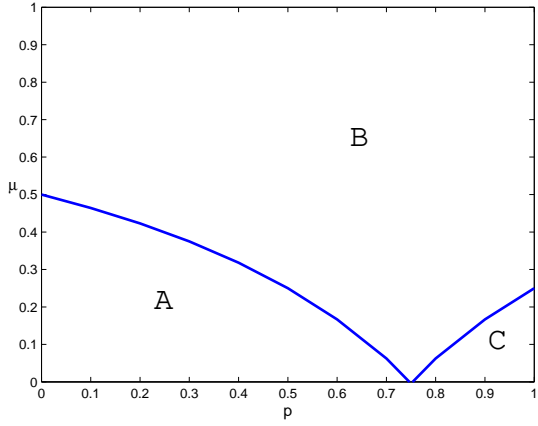}
\caption{Different parameters regions for optimal
recovery in the case of two qubits channel.} \label{Regions}
\end{figure}

Then, the following theorem holds.
\begin{theorem}\label{ABC2}
Upon recovery, the maximum achievable entanglement fidelity for the channel of Definition \ref{T2} is:
\begin{itemize}
\item\textbf{Region A}\\
\begin{equation}
F_{max}^{A}(T_{corr})=1-p,\nonumber
\end{equation}
for $0<\mu<\mu_{AB}=\frac{3-4p}{6-4p}$.
\item\textbf{Region B}\\
\begin{equation}
F_{max}^{B}(T_{corr})=(1-\mu)(1-2p+\frac{4p^2}{3})+\mu,\nonumber
\end{equation}
for $\mu>\mu_{AB}$ and $\mu>\mu_{BC}=\frac{4p-3}{4p}$.
\item\textbf{Region C}\\
\begin{equation}
F_{max}^{C}(T_{corr})=(1-\mu)\frac{p}{3}+\mu p,\nonumber
\end{equation}
for $0<\mu<\mu_{BC}$.
\end{itemize}
\end{theorem}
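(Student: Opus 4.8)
The plan is to treat \eqref{27} as an optimization problem and exploit the fact that it is \emph{linear} in the free recovery parameters $q_{\a,\gamma}$. By Lemma~\ref{th:rec}, for each fixed $\a$ the numbers $\{q_{\a,\gamma}\}_{\gamma=0}^{3}$ are nonnegative and satisfy $\sum_{\gamma}q_{\a,\gamma}=1$, so they range over a probability simplex. Crucially, the blocks $\a=0$ and $\a=i$ ($i=1,2,3$) in \eqref{27} involve disjoint sets of these parameters, so the maximization decouples into independent linear programs over simplices. Each such program is solved trivially by concentrating all the weight on the coordinate carrying the largest coefficient.

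Carrying this out, the $\a=0$ block depends only on $q_{0,0}$, with coefficient proportional (through the positive factor $1-p$) to $(1-\mu)(1-\tfrac{4p}{3})+\mu = 1-\tfrac{4p}{3}(1-\mu)$; this is positive precisely when $\mu>\mu_{BC}=\frac{4p-3}{4p}$, so one sets $q_{0,0}=1$ above this threshold and $q_{0,0}=0$ below it. For each $i\in\{1,2,3\}$ only $q_{i,0}$ and $q_{i,i}$ appear, with coefficients $(1-\mu)(1-\tfrac{4p}{3})$ and $\mu$ respectively; the optimum places unit weight on $q_{i,0}$ when $(1-\mu)(1-\tfrac{4p}{3})\ge\mu$, i.e.\ when $\mu\le\mu_{AB}=\frac{3-4p}{6-4p}$, and on $q_{i,i}$ otherwise. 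Thus both thresholds $\mu_{AB}$ and $\mu_{BC}$ arise directly as the loci where these sign comparisons flip.

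The remaining step is to substitute the three resulting assignments back into \eqref{27} and simplify. I would first verify that the sign conditions are mutually consistent and partition the $(\mu,p)$ strip into exactly the three regions of Fig.~\ref{Regions}: Region A can occur only for $p<\tfrac34$ (where $\mu_{BC}<0$ forces the $q_{0,0}$ coefficient positive), Region C requires $p>\tfrac34$, and the two threshold curves meet at $p=\tfrac34$. Then in Region A ($q_{0,0}=1$, $q_{i,0}=1$) the entire $(1-\mu)$-dependent part cancels, leaving $F_{max}^{A}=1-p$; in Region B ($q_{0,0}=1$, $q_{i,i}=1$) collecting terms gives $1+(1-\mu)(-2p+\tfrac{4p^2}{3})=(1-\mu)(1-2p+\tfrac{4p^2}{3})+\mu$; and in Region C ($q_{0,0}=0$, $q_{i,i}=1$) only two terms survive, yielding $(1-\mu)\tfrac{p}{3}+\mu p$.

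I do not expect a genuine conceptual obstacle here, since linear programming over a product of simplices is completely transparent. The only points demanding care are the region analysis — confirming that the three sign-regions are exhaustive, non-overlapping, and glued correctly along $p=\tfrac34$ — and the algebraic simplification, where the clean vanishing of the $(1-\mu)$ coefficient in Region A is the one cancellation worth double-checking.
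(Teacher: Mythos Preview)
Your proposal is correct and follows essentially the same approach as the paper: both decouple \eqref{27} into independent per-$\alpha$ linear optimizations over probability simplices, identify the optimal vertex in each block by the sign of the relevant coefficient (giving the thresholds $\mu_{AB}$ and $\mu_{BC}$), and then substitute back to obtain the three regional expressions. The only cosmetic difference is that you phrase the decoupling as ``linear programming over a product of simplices'' whereas the paper writes it as maximizing each component $F^{(\alpha)}_{corr}=R_\alpha\circ T_\alpha$ separately; the paper is also slightly more explicit in noting that for $\alpha=i$ at least one of the two competing coefficients is nonnegative (since $\mu\ge 0$), which justifies restricting attention to the vertices $q_{i,0}$ and $q_{i,i}$ rather than $q_{i,j}$ with $j\neq 0,i$.
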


\begin{proof}
The optimal recovery channel is achieved by maximising
expression \eqref{27} over the parameters $q_{_{\a,\gamma}}$.
Our strategy to maximize the entanglement fidelity is to optimize
the correction performance for each channel component
\begin{equation}
T^{(\a)}_{corr}=R_{\a}\circ T_{\a}.\nonumber
\end{equation}
When the outcome of the measurement is $\alpha=0$, the entanglement fidelity of the corrected map $T^{(\a=0)}_{corr}$
is
\begin{eqnarray}
F^{(0)}_{corr}&=&(1-\mu)(1-p)\frac{p}{3}\cr\cr
&+&(1-p)[(1-\mu)(1-\frac{4p}{3})+\mu]q_{_{00}}.\nonumber
\end{eqnarray}
For $(1-\mu)(1-\frac{4p}{3})+\mu>0$ the coefficient of $q_{_{00}}$ is positive, therefore the maximum of $F^{(0)}_{corr}$ is attained by choosing $q_{_{0,0}}=1 $. For
$(1-\mu)(1-\frac{4p}{3})+\mu<0$ the maximum is achieved for
$q_{_{0,0}}=0$. This means that
for $\mu < \frac{4p-3}{4p}$, if the outcome of the measurement is $0$
(no error on the first qubit), the most
appropriate recovery is to perform a Pauli channel on the second qubit and leave the first qubit unchanged.
For $\mu > \frac{4p-3}{4p}$, if the outcome of the measurement is $0$,
the amount of correlation on noise is
large enough to ensure that the second qubit has passed through the channel safely and no correction is
required on either of them.

To find the optimum recovery for the other possible outcomes of the measurement
we have to maximize the following expressions
\begin{equation}\label{fi}
F^{(i)}_{corr}=(1-\mu)\frac{p^2}{9}+(1-\mu)\frac{p}{3}(1-\frac{4p}{3})q_{_{i,0}}+\mu \frac{p}{3} q_{_{i,i}}.
\end{equation}
Notice that the probabilities $q_{_{i,j}}$ with $j\neq i$ do not appear in
(\ref{fi}). Moreover, since $F^{(i)}_{corr}$ is linear in the parameters
$q_{_{i,j}}$ and at least one of the coefficients is positive, remembering
the normalization condition $\sum_{\gamma} q_{_{\a,\gamma}}=1$, we set
$q_{_{i,j}}=0$ for $j\neq i$ to achieve the maximum value for $F^{(i)}_{corr}$.
Hence we can write $q_{_{i,0}}=1-q_{_{i,i}}$. Substituting it in equation
(\ref{fi}) we get
\begin{equation}
F^{(i)}_{corr}=(1-\mu)\frac{p}{3}(1-p)+\frac{p}{3}(\mu-(1-\mu)(1-\frac{4p}{3}))q_{_{i,i}}.\nonumber
\end{equation}
Therefore if the outcome of the measurement is $i=1,2,3$, for $\mu>\frac{3-4p}{6-4p}$ the optimum correction can be
performed by taking $q_{_{i,i}}=1$ and for $\mu<\frac{3-4p}{6-4p}$ the best performance of the recovery is attainable by
taking $q_{_{i,0}}=1$. Therefore the optimum correction varies depending on the values of $p$ and $\mu$, and can be summarized as:

Region A: In this region $0<\mu<\mu_{AB}=\frac{3-4p}{6-4p}$.
The optimum correction is achieved by choosing $q_{_{\a,0}}=1$:
\begin{equation}
R_{\alpha}(\rho')=(\sigma_{\alpha}\otimes  {I})\rho'(\sigma_{\alpha}\otimes  {I}).\nonumber
\end{equation}
Therefore the maximum entanglement fidelity in this region is given by
\begin{equation}
F_{max}^{A}(T_{corr})=1-p.\nonumber
\end{equation}
Region B:
In this region $\mu>\mu_{AB}$ and $\mu>\mu_{BC}=\frac{4p-3}{4p}$.
For the measurement outcome $\a$ the
optimum recovery is given by $q_{_{\a,\a}}=1$:
\begin{equation}
R_{\alpha}(\rho')=(\sigma_{\alpha}\otimes \sigma_{\alpha})\rho'(\sigma_{\alpha}\otimes\sigma_{\alpha}).\nonumber
\end{equation}
Therefore the maximum entanglement fidelity in this region is given by
\begin{equation}
F_{max}^{B}(T_{corr})=(1-\mu)(1-2p+\frac{4p^2}{3})+\mu.\nonumber
\end{equation}
Region C: In this region $0<\mu<\mu_{BC}$.
If the outcome of the measurement is $\a=0$ the
optimal recovery is given by $q_{_{0,0}}=0$:
\begin{equation}
R_{0}(\rho')=\sum_{i=1}^3q_i( {I}\otimes \sigma_{i})\rho'( {I}\otimes \sigma_{i})\hskip 1cm \sum_{i=1}^3q_i=1,
\nonumber
\end{equation}
and for the measurement outcome $i=1,2,3$, the optimal recovery is given by $q_{_{i,i}}=1$
\begin{equation}
R_{i}(\rho')=(\sigma_i\otimes \sigma_{i})\rho'(\sigma_i\otimes \sigma_{i})\hskip 1cm i=1,2,3.\nonumber
\end{equation}
The maximum entanglement fidelity takes the form
\begin{equation}
F_{max}^{C}(T_{corr})=(1-\mu)\frac{p}{3}+\mu p.\nonumber
\end{equation}
\end{proof}

Based on Theorem \ref{ABC2} we can identify three different regions for
the optimum correction in the plane of $p$ and $\mu$, as shown in figure \ref{Regions}.

\begin{remark}
It is interesting to notice that the critical value $\mu_{AB}$ for the
correlation parameter $\mu$ has the same form as the one characterising the
correlated depolarizing channel in terms of classical information transmission \cite{MP}.
In that context the critical value $\mu_{AB}$ gives a threshold value for
the optimal input states: the mutual information along the channel is
maximised with product states for  $\mu\leq\mu_{AB}$, while it achieves its
maximum value with maximally entangled states for $\mu\geq\mu_{AB}$ \cite{MP}.
\end{remark}



\section{Depolarizing channel for $n$ qubits}

In the previous section we have seen how the performance of the
quantum feedback control scheme behaves if we can perform measurements over
$\mathcal{L}(\mathcal{K})$ while the total
Hilbert space of the environment is $\mathcal{K}\otimes \mathcal{K}$.
If we had full access to the environment
we could completely retrieve quantum information.
However, we have shown that having only partial access to the environment
and exploiting the correlation in
noise we are still capable of partially recovering quantum information.
Now we want to see how the performance of
the correction behaves if we keep increasing the Hilbert space of
the environment without increasing our
 access to it. To do so we consider a correlated depolarizing channel
 defined by $T:\mathcal{L}(\mathcal{H}^{\otimes n})\rightarrow \mathcal{L}
(\mathcal{H}^{\otimes n})$, resembling the long term memory channels
introduced in \cite{DD07},
and we perform measurement on $\mathcal{L}(\mathcal{K})$ while the state of the environment belongs to $\mathcal{L}(\mathcal{K}^{\otimes n})$.

\begin{definition}\label{Tn}
Let us define the selected output of the channel corresponding to the classical outcome of the measurement, $\a$,
as
\begin{equation}
T_{\a}(\rho):=(1-\mu)T_{\a}^{uc}(\rho)+\mu T_{\a}^{cc}(\rho),\nonumber
\end{equation}
with
\begin{equation}
T_{\a}^{uc}(\rho):=\sum_{\b_1\cdots\b_n}p_{\a}\prod_{i=2}^np_{_{\b_i}}(\sigma_a\otimes\bigotimes_{i=2}^n\sigma_{_{\b_i}})
\rho(\sigma_a\otimes\bigotimes_{i=2}^n\sigma_{_{\b_i}}),\nonumber
\end{equation}
and
\begin{equation}
T_{\a}^{cc}(\rho):=p_{\a}\sigma_{_{\a}}^{\otimes n}\rho\sigma_{_{\a}}^{\otimes n}.\nonumber
\end{equation}
\end{definition}\medskip

Similarly to the previous section we have the following

\begin{lemma}\label{recn}
The recovery operator
\begin{equation}
R_{\a}(\rho')=\sum_{\gamma_2,\cdots,\gamma_n}q_{_{\a,\gamma_2,\cdots,\gamma_n}}(\sigma_a\otimes\bigotimes_{i=2}^n\sigma_{_{\gamma_i}})\rho(\sigma_a\otimes\bigotimes_{i=2}^n\sigma_{_{\gamma_i}}),\nonumber
\end{equation}
with the constraint
\begin{equation}\label{qn}
\sum_{\gamma_2,\cdots,\gamma_n}q_{_{\a,\gamma_2,\cdots,\gamma_n}}=1,\hskip 5mm \forall\a,
\end{equation}
is optimal for the channel of Definition \ref{Tn}.
\end{lemma}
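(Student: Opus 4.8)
The plan is to follow the same line of argument used for the two-qubit case in Lemma~\ref{th:rec}, extended to $n$ qubits. First I would fix the recovery on the monitored (first) qubit to be $\sigma_\a$: in both components of Definition~\ref{Tn} the outcome $\a$ identifies the Pauli error acting on the first qubit, and since $\sigma_\a^2=I$ this choice exactly inverts it. The most general recovery compatible with this is then
\begin{equation}
R_{\a}(\rho')=\sum_{\gamma}(\sigma_\a\otimes A_{\gamma}^{\a})\,\rho'\,(\sigma_\a\otimes A_{\gamma}^{\a})^{\dag},\nonumber
\end{equation}
where each $A_{\gamma}^{\a}$ is an arbitrary operator on $\mathcal{H}^{\otimes(n-1)}$ acting on the unmonitored qubits, constrained only by completeness, $\sum_{\gamma}(\sigma_\a\otimes A_{\gamma}^{\a})^{\dag}(\sigma_\a\otimes A_{\gamma}^{\a})=I$.

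Next I would expand every $A_{\gamma}^{\a}$ in the tensor-product Pauli basis, $A_{\gamma}^{\a}=\sum_{\delta_2,\cdots,\delta_n}c_{\gamma,\delta_2,\cdots,\delta_n}^{\a}\bigotimes_{i=2}^{n}\sigma_{\delta_i}$, so that completeness becomes the scalar normalisation $\sum_{\gamma,\delta_2,\cdots,\delta_n}|c_{\gamma,\delta_2,\cdots,\delta_n}^{\a}|^2=1$. Composing this recovery with the two parts of $T_\a$ and using $\sigma_\a^2=I$ collapses the first-qubit factor to the identity in every term, leaving Kraus operators of the form $I\otimes(A_{\gamma}^{\a}\bigotimes_{i=2}^n\sigma_{\b_i})$ from $T_\a^{uc}$ and $I\otimes(A_{\gamma}^{\a}\,\sigma_\a^{\otimes(n-1)})$ from $T_\a^{cc}$.

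The central step, which I expect to be the main obstacle, is to feed these Kraus operators into \eqref{ftc} and show that the off-diagonal coefficients drop out. Here I would use the orthogonality of the Pauli tensor-product basis, $\tr(\sigma_\delta\sigma_\b)=2\delta_{\delta,\b}$, together with $\tr(I\otimes M)=2\,\tr(M)$ on the first qubit, so that each $|\tr(\cdot)|^2$ selects a single squared coefficient. After summing over $\gamma$ the fidelity will depend on the $A_{\gamma}^{\a}$ only through the non-negative quantities
\begin{equation}
q_{\a,\gamma_2,\cdots,\gamma_n}=\sum_{\gamma}|c_{\gamma,\gamma_2,\cdots,\gamma_n}^{\a}|^2,\nonumber
\end{equation}
which by the normalisation above satisfy the constraint~\eqref{qn}.

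Finally I would observe that this fidelity is exactly the one produced by taking the recovery to be the correlated Pauli channel $\sigma_\a\otimes\bigotimes_{i=2}^n\sigma_{\gamma_i}$ with weights $q_{\a,\gamma_2,\cdots,\gamma_n}$. Since any admissible family of general operators $A_{\gamma}^{\a}$ is thereby matched by an admissible Pauli-channel recovery achieving the same entanglement fidelity, no optimality is lost by restricting to the stated form, which establishes the lemma. The only genuinely delicate point is the bookkeeping in the multi-qubit trace-orthogonality argument; the rest is a direct transcription of the two-qubit computation.
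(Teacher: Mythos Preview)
Your proposal is correct and follows essentially the same route as the paper: fix $\sigma_\a$ on the monitored qubit, expand the generic $(n-1)$-qubit recovery operators in the tensor Pauli basis, use trace orthogonality so that only the squared moduli $|c|^2$ survive in \eqref{ftc}, and collect them into probabilities $q_{\a,\gamma_2,\cdots,\gamma_n}$ satisfying \eqref{qn}. The only difference is cosmetic---the paper labels the Kraus operators of $R_\a$ by a multi-index $(\gamma_2,\dots,\gamma_n)$ rather than a single $\gamma$---and your explicit mention of the orthogonality step $\tr(\sigma_\delta\sigma_\b)=2\delta_{\delta,\b}$ is exactly what underlies the paper's passage to \eqref{entfnc}.
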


\begin{proof}
Having the classical outcome of the measurement $\a$ we know that error $\sigma_{\a}$ has occurred on the first
qubit and the effect
of error can be completely removed by performing $\sigma_{\a}$ on the first qubit for correction. Therefore the
recovery map should have the following form
\begin{equation}
R_{\a}(\rho')=\sum_{\gamma_2,\cdots,\gamma_n}(\sigma_{\a}\otimes \mathcal{A}^{(\a)}_{_{\gamma_2,\cdots,\gamma_n}})\rho' (\sigma_{\a}\otimes \mathcal{A}^{(\a)}_{_{\gamma_2,\cdots,\gamma_n}})^{\dag}
\end{equation}
where $\mathcal{A}^{(\a)}_{_{\gamma_2,\cdots,\gamma_n}}$ is an operator
acting on $n-1$ qubits. Expanding it in terms of products of Pauli matrices
we have
\begin{equation}
\mathcal{A}^{(\a)}_{_{\gamma_2,\cdots,\gamma_n}}=\sum_{_{\delta_2,\cdots,\delta_n}}c^{^{\gamma_2,\cdots,\gamma_n}}_{_{\a,\delta_2,\cdots,\delta_n}}\sigma_{\delta_2}\otimes\sigma_{\delta_3}\otimes\cdots\otimes\sigma_{\delta_n}
\end{equation}
The completeness condition
\begin{equation}
\sum_{\gamma_2,\cdots,\gamma_n}(\sigma_{\a}\otimes \mathcal{A}^{(\a)}_{_{\gamma_2,\cdots,\gamma_n}})^{\dag}(\sigma_{\a}\otimes \mathcal{A}^{(\a)}_{_{\gamma_2,\cdots,\gamma_n}})=I,\nonumber
\end{equation}
imposes the following constraint on the
coefficients $c^{^{\gamma_2,\cdots,\gamma_n}}_{_{\a,\delta_2,\cdots,\delta_n}}$
\begin{equation}
\sum_{\gamma_2,\cdots,\gamma_n}|c^{^{\gamma_2,\cdots,\gamma_n}}_{_{\a,\delta_2,\cdots,\delta_n}}|^2=1,
\end{equation}
Considering this general form of Kraus operators for the recovery map and using (\ref{ftc}), the entanglement fidelity of the corrected channel
takes the form
\begin{eqnarray}
F(T_{corr})&=&(1-\mu)\sum_{_{\a,\beta_2,\cdots\beta_n}}\sum_{_{\gamma_2\cdots\gamma_n}}|c^{^{\gamma_2,\cdots,\gamma_n}}_{_{\a,\beta_2,\cdots,\beta_n}}|^2p_{\a}\prod_{i=2}^np_{_{\beta_i}}\cr\cr\cr
&+&\mu \sum_{\a,\gamma_2,\cdots,\gamma_n}|c^{^{\gamma_2,\cdots,\gamma_n}}_{_{\a,\a,\cdots,\a}}|^2p_{\a}.
\end{eqnarray}
The above expression can be written as
\begin{eqnarray}\label{entfnc}
F(T_{corr})&=&(1-\mu)\sum_{_{\a,\beta_2,\cdots\beta_n}}p_{\a}
q_{_{\a,\beta_2,\cdots,\beta_n}}\prod_{i=2}^np_{_{\beta_i}}\cr\cr
&+&\mu \sum_{\a}p_{\a}q_{_{\a,\a,\cdots,\a}}.
\end{eqnarray}
By defining
\begin{equation}
q_{_{\a,\beta_2,\cdots,\beta_n}}=\sum_{_{\gamma_2\cdots\gamma_n}}|c^{^{\gamma_2,\cdots,\gamma_n}}_{_{\a,\beta_2,\cdots,\beta_n}}|^2
\end{equation}
The same value of entanglement fidelity in equation \eqref{entfnc} will be obtained by assuming
the recovery map with following Kraus operators:
\begin{equation}
\sqrt{q_{_{\a,\gamma_2,\cdots,\gamma_n}}}\sigma_{\a}\otimes\sigma_{\gamma_2}\otimes\cdots\otimes\sigma_{\gamma_n}
\end{equation}
with the constraint that $\sum_{_{\gamma_2\cdots\gamma_n}}q_{_{\a,\gamma_2,\cdots,\gamma_n}}=1$ for all $\a$.
\end{proof}

The entanglement fidelity \eqref{ftc} of the channel in the Definition \ref{Tn} corrected using Theorem \ref{recn} results
\begin{eqnarray}
F(T_{corr})&=&(1-\mu)\sum_{\a,\b_2\cdots\b_n}p_{\a}p_{_{\b_2}}\cdots p_{_{\b_{n}}}q_{_{\a,\b_2,\cdots,\b_n}}\cr\cr
&+&\mu\sum_{\a}p_{\a}q_{_{\a,\a,\cdots,\a}}.
\label{Fnqadd}
\end{eqnarray}
\begin{figure}[t]
\centering
\includegraphics[height=0.25\textheight]{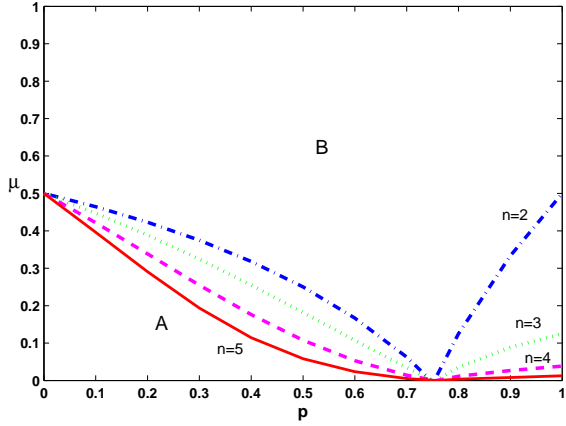}
\caption{Different parameters regions for optimal recovery in the case of two (dashed-dotted line), three (dotted line), four (dashed line) and five (solid line) qubit channels.} \label{RegionsNqubits}
\end{figure}
Then we have the following Theorem.
\begin{theorem}\label{ABCn}
Upon recovery, the maximum achievable entanglement fidelity for the channel of Definition \ref{Tn} is:
\begin{itemize}
\item\textbf{Region A}\\
\begin{equation}
F_{max}^{A}(T_{corr})=(1-\mu)(1-p)^{n-1}+\mu(1-p),\nonumber
\end{equation}
for $0<\mu<\frac{(1-p)^{n-1}-(\frac{p}{3})^{n-1}}{(1-p)^{n-1}-(\frac{p}{3})^{n-1}+1}$.
\item\textbf{Region B}\\
\begin{equation}
F_{max}^{B}(T_{corr})=(1-\mu)((1-p)^{n}+3(\frac{p}{3})^n)+\mu,\nonumber
\end{equation}
for $\mu>\frac{(1-p)^{n-1}-(\frac{p}{3})^{n-1}}{(1-p)^{n-1}-(\frac{p}{3})^{n-1}+1}$ and  $\mu>-\frac{(1-p)^{n-1}-(\frac{p}{3})^{n-1}}{(1-p)^{n-1}-(\frac{p}{3})^{n-1}-1}$.
\item\textbf{Region C}\\
\begin{equation}
F_{max}^{C}(T_{corr})=(1-\mu)(\frac{p}{3})^{n-1}+\mu p,\nonumber
\end{equation}
for $0<\mu<-\frac{(1-p)^{n-1}-(\frac{p}{3})^{n-1}}{(1-p)^{n-1}-(\frac{p}{3})^{n-1}-1}$.
\end{itemize}
\end{theorem}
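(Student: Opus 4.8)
The plan is to follow the two–qubit argument of Theorem~\ref{ABC2}, exploiting that the fidelity \eqref{Fnqadd} splits over the measurement outcomes. Grouping its terms by $\a$ and writing $q_{\a,\a,\dots,\a}=\sum_{\gamma_2,\dots,\gamma_n}\delta_{(\gamma_2,\dots,\gamma_n),(\a,\dots,\a)}\,q_{\a,\gamma_2,\dots,\gamma_n}$, I would express $F(T_{corr})=\sum_{\a}F^{(\a)}$ with $F^{(\a)}=p_{\a}\sum_{\gamma_2,\dots,\gamma_n}\big[(1-\mu)\prod_{i=2}^np_{\gamma_i}+\mu\,\delta_{(\gamma_2,\dots,\gamma_n),(\a,\dots,\a)}\big]q_{\a,\gamma_2,\dots,\gamma_n}$. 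Each $F^{(\a)}$ involves only the block $\{q_{\a,\gamma_2,\dots,\gamma_n}\}$, which by \eqref{qn} and positivity lives on a probability simplex; hence the outcomes decouple and I can maximise each $F^{(\a)}$ independently.

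Since $F^{(\a)}$ is linear in these simplex variables, its maximum is attained at a vertex, i.e. by putting all the weight on the single tuple $(\gamma_2,\dots,\gamma_n)$ whose bracketed coefficient is largest. For each $\a$ only two tuples can be optimal: the \emph{uncorrelated–optimal} one (maximising $\prod_i p_{\gamma_i}$ among non-aligned tuples) and the \emph{aligned} one $(\a,\dots,\a)$, which alone carries the bonus $\mu$. With $p_0=1-p$ and $p_{1}=p_2=p_3=p/3$, the unconstrained maximum of $\prod_i p_{\gamma_i}$ is $\max(1-p,p/3)^{\,n-1}$, so the whole analysis is controlled by the sign of $D:=(1-p)^{n-1}-(p/3)^{n-1}$, equivalently by whether $p<3/4$ or $p>3/4$.

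I would then run the elementary comparison for the two kinds of outcome. For $\a=0$ the aligned tuple is $(0,\dots,0)$, which for $p<3/4$ also realises the global uncorrelated maximum; there it automatically wins (it maximises the uncorrelated part and additionally gains $\mu\ge0$), giving $F^{(0)}=(1-p)\big[(1-\mu)(1-p)^{n-1}+\mu\big]$. For $p>3/4$ the uncorrelated maximum $(p/3)^{n-1}$ is reached instead by an all–nonzero tuple distinct from $(0,\dots,0)$, and the aligned choice is preferred only above the $\mu$ solving $(1-\mu)(1-p)^{n-1}+\mu=(1-\mu)(p/3)^{n-1}$, whose solution is the threshold $\mu_{BC}$. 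For $\a=i\in\{1,2,3\}$ the aligned tuple $(i,\dots,i)$ never coincides with the uncorrelated optimum $(0,\dots,0)$ when $p<3/4$, so comparing $(1-\mu)(p/3)^{n-1}+\mu$ with $(1-\mu)(1-p)^{n-1}$ gives the threshold $\mu_{AB}=D/(D+1)$; when $p>3/4$ another all–nonzero tuple already attains the uncorrelated optimum, so the bonus $\mu$ makes the aligned choice win for every $\mu$.

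Collecting the winners outcome by outcome leaves exactly three coordinated recoveries: the identity on qubits $2,\dots,n$ (all $q_{\a,0,\dots,0}=1$), the fully aligned Pauli recovery (all $q_{\a,\a,\dots,\a}=1$), and the mixed one (aligned for $\a\ne0$, all–nonzero for $\a=0$). Summing $F^{(\a)}$ and using $\sum_\a p_\a=1$ and $\sum_\a p_\a^{\,n}=(1-p)^n+3(p/3)^n$, these collapse to $F_{max}^{A}$, $F_{max}^{B}$ and $F_{max}^{C}$, valid on the regions delimited by $\mu_{AB}$ and $\mu_{BC}$ as in the statement. I expect the only real obstacle to be the bookkeeping of the middle steps: arguing that no vertex other than the aligned and the uncorrelated–optimal ones can maximise $F^{(\a)}$, and correctly separating $p<3/4$ from $p>3/4$. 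The delicate point is the asymmetry between the two outcome types — for $\a=0$ the aligned tuple doubles as the uncorrelated optimum, whereas for $\a=i$ it does not — and it is precisely this asymmetry that makes the two distinct thresholds $\mu_{AB}$ and $\mu_{BC}$ emerge.
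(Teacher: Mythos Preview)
Your proposal is correct and follows essentially the same route as the paper: split \eqref{Fnqadd} over the outcomes $\alpha$, observe that each $F^{(\alpha)}$ is linear on the simplex \eqref{qn} so its maximum sits at a vertex, identify the only competing vertices as the aligned tuple $(\alpha,\dots,\alpha)$ and the uncorrelated maximiser of $\prod_i p_{\gamma_i}$, and then split the analysis according to $p\lessgtr 3/4$ to obtain the two thresholds $\mu_{AB}$ and $\mu_{BC}$. The paper carries out the same optimisation, just with the intermediate step of writing out all the permutation terms in \eqref{F0napp1} and \eqref{Finapp1} before discarding the subdominant ones, whereas you invoke the vertex principle directly; the substance and the resulting case distinctions are identical.
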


\begin{proof}
To maximize the entanglement fidelity in \eqref{Fnqadd} over
its parameters, we maximize it for each value of the measurement outcome
$\a$. If the outcome of the measurement is zero then the entanglement fidelity is given by
\begin{eqnarray}\label{F0napp1}
F^{(0)}_{corr}&=&[(1-\mu)(1-p)^n+\mu (1-p)]q_{{_0,\cdots,_0}}\cr\cr
&+&(1-\mu)(1-p)^{n-1}\frac{p}{3}\sum_{j,perm}q_{_0,_j,_0
\cdots,_0}\cr\cr
&+&(1-\mu)(1-p)^{n-2}(\frac{p}{3})^{2}\sum_{j,k,perm}q_{_0,_j,_k,_0,\cdots,_0}
+\cdots\cr\cr
&+&(1-\mu)(1-p)(\frac{p}{3})^{n-1}\sum_{i_2\cdots i_n}q_{_0,i_2,\cdots,i_n},
\end{eqnarray}
where the notation $perm$ in the summations above refers to all possible
permutations of the last $n-1$ indexes. Notice that for $p<3/4$ the
largest coefficient in the above expression is the one in front of
$q_{_0,_0,_0 \cdots,_0}$ and therefore in this case, for any value of $\mu$,
$F^{(0)}_{corr}$ is always maximised by $q_{_0,_0,_0 \cdots,_0}=1$. In the case
of $p>3/4$ the largest coefficient, excluding the first,
is given by the last one in (\ref{F0napp1}). Therefore, in this case the
optimal solution can be searched by setting to zero all values of $q$ that
contain at least one value 0 among the last $n-1$ indexes.
The expression for the entanglement fidelity that we are going to maximise is
now simplified as
\begin{eqnarray}\label{F0napp}
F^{(0)}_{corr}&=&[(1-\mu)(1-p)^n+\mu (1-p)]q_{_{0,\cdots,0}}\cr\cr
&+&(1-\mu)(1-p)(\frac{p}{3})^{n-1}\sum_{i_2\cdots i_n}q_{_0,i_2,\cdots,i_n}.
\end{eqnarray}
Using the condition in equation (\ref{qn}) we know that
\begin{equation}\label{conditionQ1}
\sum_{i_2\cdots i_n}q_{_0,i_2,\cdots,i_n}=1-q_{_{0,\cdots,0}}.\nonumber
\end{equation}
Replacing it in equation (\ref{F0napp}) we find that
\begin{eqnarray}\label{F0n}
&F^{(0)}_{corr}=(1-\mu)(1-p)(\frac{p}{3})^{n-1}\nonumber\\
&+\left((1-\mu)[(1-p)^n-(1-p)(\frac{p}{3})^{n-1}]+\mu (1-p)\right)q_{_{0,\cdots,0}}.\nonumber
\end{eqnarray}
It is easy to see that for
\begin{equation}
0<\mu<-\frac{(1-p)^{n-1}-(\frac{p}{3})^{n-1}}
{(1-p)^{n-1}+(\frac{p}{3})^{n-1}-1}\nonumber
\end{equation}
the coefficient of $q_{_{0,\cdots,0}}$ is negative,
therefore the maximum of $F^{(0)}_{corr}$ is attainable for $q_{_{0,\cdots,0}}=0$. For
\begin{equation}
\mu>-\frac{(1-p)^{n-1}-(\frac{p}{3})^{n-1}}{(1-p)^{n-1}+
(\frac{p}{3})^{n-1}-1}\nonumber
\end{equation}
the coefficient of $q_{_{0,\cdots,0}}$ is positive so the maximum of
$F^{(0)}_{corr}$ is achieved by taking
$q_{_{0,\cdots,0}}=1$.

When the outcome of the measurement is $i=1,2,3$ then the entanglement
fidelity of the corrected channel is
\begin{eqnarray}\label{Finapp1}
&&F^{(i)}_{corr}=(1-\mu)(1-p)^{n-1}\frac{p}{3}q_{{_i,_0,\cdots,_0}}\cr\cr
&&+(1-\mu)(1-p)^{n-2}(\frac{p}{3})^{2}\sum_{j,perm}q_{_i,_j,_0
\cdots,_0}\cr\cr
&&+(1-\mu)(1-p)^{n-3}(\frac{p}{3})^{3}\sum_{j,k,perm}q_{_i,_j,_k,_0,\cdots,_0}
+\cdots\cr\cr
&&+\frac{p}{3}\sum_{i_2\cdots i_n}q_{i,i_2,\cdots,i_n}\left[(1-\mu)
(\frac{p}{3})^{n-1}+\mu\prod_{i_k,_{k=2}}^n \delta_{i,i_k}
\right].\nonumber\\
\end{eqnarray}
Notice that for $p>3/4$ the
largest coefficient in the above expression is the one in front of
$q_{i,i,\cdots,i}$, therefore in this case the optimal solution corresponds
to $q_{i,i,\cdots,i}=1$ for any value of $\mu$. Notice also that in the
last line the coefficient in front of $q_{i,i,\cdots,i}$ is always larger
than the other ones, so in order to look for the maximum we can always
set $q_{i,i_2,\cdots,i_n}=0$ for all cases except $q_{i,i,\cdots,i}$.
Moreover, for  $p<3/4$, the largest coefficients in  (\ref{Finapp1}),
with the exclusion of the last line, is the one in front of
$q_{{_i,_0,\cdots,_0}}$. Therefore, by these considerations, we can restrict
our search for the maximum values to the case of vanishing $q$ except for
$q_{{_i,_0,\cdots,_0}}$ and $q_{i,i,\cdots,i}$. We can then write
\begin{eqnarray}
F^{(i)}(T_{corr})&=&(1-\mu)(1-p)^{n-1}\frac{p}{3}q_{_{_i,_0,\cdots,_0}}\cr\cr
&+&[(1-\mu)(\frac{p}{3})^n+\mu \frac{p}{3}] q_{_{i,i,\cdots,i}}.\nonumber
\end{eqnarray}
From equation (\ref{qn}) we know that
\begin{equation}\label{conditionQ2}
q_{_{i,i,\cdots,i}}=1-q_{_{i,0,\cdots,0}}.\nonumber
\end{equation}
Therefore
\begin{eqnarray}
&&F^{(i)}(T_{corr})=(1-\mu)(\frac{p}{3})^n+\mu\frac{p}{3}\cr\cr
&+&\frac{p}{3}\left((1-\mu)[(1-p)^{n-1}-(\frac{p}{3})^{n-1}]-\mu\right) q_{_{i,0,\cdots,0}}.\nonumber
\end{eqnarray}
The coefficient of $q_{_{i,0,\cdots,0}}$ is positive for
\begin{equation}
0<\mu<\frac{(1-p)^{n-1}-(\frac{p}{3})^{n-1}}{(1-p)^{n-1}+(\frac{p}{3})^{n-1}+1}.\nonumber
\end{equation}
Therefore in this region we should take  $q_{_{i,0,\cdots,0}}=1$ and for
\begin{equation}
\mu>\frac{(1-p)^{n-1}-(\frac{p}{3})^{n-1}}{(1-p)^{n-1}+(\frac{p}{3})^{n-1}+1},\nonumber
\end{equation}
the coefficient of $q_{_{i,0,\cdots,0}}$ is negative and therefore we should take $q_{_{i,i,\cdots,i}}=1$.
\end{proof}\medskip


Figure \ref{RegionsNqubits} shows the regions $A$, $B$, $C$
of Theorem \ref{ABCn} for different values of $n$ (including the previous analyzed case of $n=2$). We can see that
by increasing $n$ the regions $A$ and $C$ become smaller. This can be
understood by noticing that
the region $A$ corresponds to the case where although the
measurement outcome shows that an error has occurred on the first qubit,
we expect that the other qubits have not experienced any error. However,
the chance that this holds true is lowered by increasing $n$. The same reasoning can be applied to the region $C$.

Figure \ref{FvN} shows the entanglement fidelity versus the number of qubits in the system for $p=0.4$ and for different values of
$\mu$. It is interesting
to notice that for large $n$ the entanglement fidelity does not go zero, due
to the role of noise correlation in performing the recovery operation.
\begin{figure}[t]
\centering
\includegraphics[height=0.25\textheight]{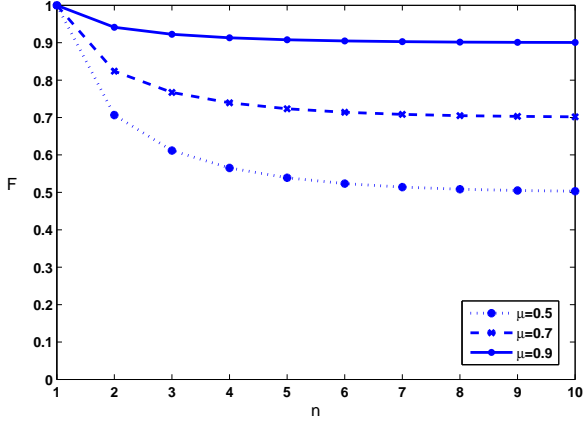}
\caption{Entanglement fidelity versus $n$ for $p=0.4$ and $\mu=0.9$ ( solid line), $\mu=0.7$ (dashed line) and $\mu=0.5$ (dotted line).} \label{FvN}
\end{figure}

\section{Conclusions}

The main result presented in this paper is the possibility of recovering
quantum information on a multipartite system by using limited access to the
environment.
In particular, we have addressed the important question of how well quantum
information can be recovered on a multiple qubit system by performing
a measurement on one environmental subsystem.
We have considered a quite general kind of correlated errors on qubits and we have determined the optimal recovery depending on the degree of errors correlation.
We have also found the scaling of the performance versus the number of qubits while monitoring the error just on one of them.
Interestingly enough, for finite degree of errors correlation, the recovery ability is preserved by increasing the number of non-measured subsystems of the environment.

As a final remark, we point out that when considering partial control one could exploit correlations residing on the system's state itself rather than on the errors.  This would be more in the spirit of Refs.\cite{man} and is left for future investigations.


\section*{Acknowledgment}

We acknowledge the financial support of the European Commission, under the FET-Open grant agreement CORNER, number FP7-ICT-213681.

\ifCLASSOPTIONcaptionsoff
  \newpage
\fi

\end{document}